\theoremstyle{remark}
 \let\MYoriglatexcaption\caption
 \renewcommand{\caption}[2][\relax]{\MYoriglatexcaption[#2]{#2}}
\newcommand{\RN}[1]{%
	\textup{\uppercase\expandafter{\romannumeral#1}}%
}
\newtheoremstyle{mystyle}
  {}
  {}
  {\itshape}
  {}
  {\bfseries}
  {.}
  { }
  {}
\theoremstyle{mystyle}
\newtheorem{theorem}{Theorem}{}
\newtheorem{proposition}{Proposition}{}
{}
{}
{}
{}
\begin{document}
%
\title{Super-resolution Method for Coherent DOA Estimation of Multiple Wideband Sources}

\author{Milad~Javadzadeh Jirhandeh, Mohammad Hossein~Kahaei \thanks{The authors are with the School of Electrical Engineering, Iran University of Science \& Technology, Tehran 16846-13114, Iran (e-mail: milad\_javadzade@elec.iust.ac.ir; kahaei@iust.ac.ir).}}%

\maketitle

\begin{abstract}
   We focus on coherent direction of arrival estimation of wideband sources based on spatial sparsity. This area of research is encountered in many applications such as passive radar, sonar, mining, and communication problems, in which an increasing attention has been devoted to improving the estimation accuracy and robustness to noise. By the development of super-resolution algorithms, narrowband direction of arrival estimation based on gridless sparse algorithms and atomic norm minimization has already been addressed. In this paper, a superresolution based method is proposed for coherent direction of arrival estimation of multiple wideband sources. We introduce an atomic norm problem by deﬁning a new set of atoms and exploiting the signal joint sparsity of different frequency subbands in a continuous spatial domain. This problem is then cast as a semideﬁnite program, which leads to implementing a new coherent direction of arrival estimation method with higher resolution and more robustness to noise. Numerical simulations show the outperformance of the proposed method compared to the conventional ones.
\end{abstract}

\begin{IEEEkeywords}
Atomic norm, gridless sparse, super-resolution, direction of arrival, coherent estimation, wideband sources.
\end{IEEEkeywords}

%
\IEEEpeerreviewmaketitle

\section{Introduction}\label{sec:Introduction}
\IEEEPARstart{W}{ideband } Direction of Arrival (DOA) estimation is popular due to its application in many fields like sonar, radar, and wireless communications. In these cases, the direction information is repeated in different frequency bands, which can be exploited to improve  the estimation precision. A conventional approach is to decompose widebands into narrowbands using specific filter banks or the Discrete Fourier Transformation (DFT) and use  the joint information of narrowband signals for DOA estimation.  These techniques can be classified into Incoherent Signal Subspace Methods (ISSM) and Coherent Signal Subspace Methods (CSSM).

In the ISSM, widebands are first divided into narrowbands and a DOA estimation method is applied to each band. The final DOA estimate is then calculated by incoherently averaging the respective results \cite{su1983signal,ferreira2012direction}.  This approach, however, suffers from a weak robustness concerning the noise and also loses its final precision when high-magnitude errors take place in narrowband DOA estimates.
In the CSSM, on the other hand, using a focusing matrix all the center frequencies of narrowband signals are mapped into a reference frequency to which a narrowband DOA estimator is applied \cite{choi2015repeated,hung1988focussing}. For instance, the Rotational Signal Subspace (RSS)\cite{hung1988focussing} may be mentioned. However, a major drawback to these methods is the need for an initial DOA estimate, which is required for designing the focusing matrix. Motivated by this deficiency, a coherent subspace method  has been addressed based on interpolation in \cite{milad}. Moreover, to  fill the gap between coherent and incoherent methods, some methods have been developed by using the signal and noise subspaces in different narrowband signals. For example, the Test of Orthogonality of Projected Subspaces (TOPS) \cite{yoon2006tops}, and Weighted Squared TOPS (WS-TOPS) \cite{hayashi2016doa} have been introduced. The main shortage of the aforementioned methods is the need for a large number of snapshots for each narrowband signal. Also, the number of DOAs should be known, a priori.

 In contrast, sparse DOA estimation methods are more practical thanks to their acceptable accuracy despite using a smaller number of snapshots. In \cite{liu2011broadband,hu2014doa}, some sparse DOA estimation methods are developed for wideband sources based on compressive sensing, in which  the DOA space is discretized by a grid for the likely values of DOAs. Then, the joint sparsity of the narrowband signals is utilized for DOA estimation by  only incorporating one snapshot of the signals. However, the main difficulty with these methods occurs for  the grid mismatch where the actual position of DOAs does not exactly lie on the grid steps. This, as a result, leads to effectively reducing the resolution of DOA estimates.

 To overcome such a difficulty in Compressed Sensing (CS), Cand$\grave{e}$s and Fernandez-Granda  extended the discrete CS to the continuous case by introducing the super-resolution concept, which directly incorporates the sparsity property in continuous domain \cite{candes2014towards}. Tang et al. \cite{tang2013compressed} used the super-resolution notion as a gridless sparse method for line spectral estimation by atomic norm minimization. Afterwards, gridless DOA estimation has been developed based on super-resolution for narrowband sources \cite{raj2018single,castanheira2019low}, which can also be applied for incoherent  DOA estimation of wideband sources.  However, such approaches exhibit inherent  disadvantages of conventional incoherent DOA estimation techniques.

In this paper, by defining a new atomic norm, we propose a coherent gridless sparse method for wideband DOA estimation based on super-resolution. The use of coherency in this method guarantees more robustness to the noise. In addition, by using only one snapshot of narrowband signals, a higher  accuracy is achieved compared to the classical methods. Also, no information  is required about the number of sources.

The paper is organized as follows. In Section \ref{sec:data model}, the signal model is defined. In Section \ref{sec:proposed}, we introduce a new atomic norm and propose the respective gridless sparse recovery problem. The performance of our method is compared to some well-known methods in Section \ref{sec:simulations} and Section \ref{sec:coclusion} concludes the results.

The following notations are respectively used in this work. Matrices and vectors are represented  by uppercase and lowercase bold letters,  ${\left(  \cdot  \right)^T}$ and ${\left(  \cdot  \right)^H}$ denote transpose and conjugate transpose operators, ${\left\| {\cdot} \right\|_F}$, ${\left\| {\cdot} \right\|_{\cal A}}$, and $\left\| {\cdot} \right\|_{\cal A}^*\;$ show the Frobenius norm, atomic norm, and dual norm of atomic norm for a matrix, $\mathop {\left\| {\cdot} \right\|}\nolimits_2$ is the $\ell _ {2} $ norm of a vector. $conv({\cdot})$  and $Tr({\cdot})$ represent the convex hull and trace operator, respectively, and $\bm{I}_m$ is an identity matrix of size $m$.
\section{Data model}\label{sec:data model}

 We consider a Uniform Linear Array (ULA) composed of $M$ omnidirectional sensors with inter-spacing $d$.  The number of sources is $K$ with the angles ${\theta _k}$, $k = 1, \ldots ,K$,  which are constant during the observation time. By applying the DFT, the array signal output is divided into $J$ narrowband signals  from ${\omega _L}$ to ${\omega _H}$. Mathematically,  we can show the array output vector in ${\omega _j}$ as
 \begin{equation}\label{formulation:1}
  {\bm{y}}\left( {{\omega _j}} \right) = {\bm{\Phi}}\left( {{\omega _j},{\boldsymbol{\theta }}} \right){\bm{s}}\left( {{\omega _j}} \right) + {\bm{n}}\left( {{\omega _j}} \right)\;\;\;\;,\;\;j = 1, \ldots ,J,
\end{equation}
where ${\bm{y}}\left( {{\omega _j}} \right) = {\left[ {{y_1}\left( {{\omega _j}} \right),\; \ldots ,{y_M}\left( {{\omega _j}} \right)} \right]^T} \in \mathbb{C}^{M\times 1}$ with ${y_m}\left( {{\omega _j}} \right)$, $m = 1, \ldots ,M$, representing the DFT of the $m$th sensor output in  ${\omega _L}\leq{\omega _j} \leq {\omega _H}$,  ${\bm{s}}\left( {{\omega _j}} \right) = {\left[ {{s_1}\left( {{\omega _j}} \right)\; \ldots ,{s_K}\left( {{\omega _j}} \right)} \right]^T} \in \mathbb{C}^{K\times 1}$ with ${s_k}\left( {{\omega _j}} \right)$ denoting the DFT of the $k$th source, ${\bm{n}}\left( {{\omega _j}} \right) \in \mathbb{C}^{M\times 1}$ is the corresponding noise, and ${\bm{\Phi}}\left( {{\omega _j},{\boldsymbol{\theta }}} \right) = {\left[ {{\bm{\phi}}\left( {{\omega _j},{\theta _1}} \right),\; \ldots ,{\bm{\phi}}\left( {{\omega _j},{\theta _K}} \right)} \right]} \in \mathbb{C}^{M\times K}$ shows the steering matrix for the DOA vector ${\boldsymbol{\theta}}  = {\left[ {{\theta _1},\; \ldots ,{\theta _K}} \right]^T}$ in ${\omega _j}$ whose columns are given by
\begin{equation}\label{formulation:2}
\begin{array}{l}
{\bm{\phi}}\left( {{\omega _j},{\theta _k}} \right) =
{\left[ {\exp \left( { - i{\omega _j}{\tau _{1,k}}} \right)\;, \ldots \;,\exp \left( { - i{\omega _j}{\tau _{M,k}}} \right)\;} \right]^T},\;
\end{array}
\end{equation}
where ${\tau _{m,k}}$ is the delay of the $k$th source with the arrival angle ${\theta _k}$ in the $m$th sensor defined as
\begin{equation}\label{formulation:3}
\begin{array}{l}
{\tau _{m,k}} = \frac{{\left( {m - 1} \right)d\sin \left( {{\theta _k}} \right)}}{c},
\end{array}
\end{equation}
with $c$ showing the wave propagation speed. We assume that ${\omega _1} > {\omega _2} >  \ldots  > {\omega _J}$ and in order to avoid ambiguity, $d$ is equal to the half of the minimum wavelength corresponding to the maximum frequency ${\omega _1}$ \cite{sellone2006robust}, that is,
\begin{equation}\label{formulation:4}
d = \frac{{{\pi c}}}{\omega _1}.
\end{equation}
Using (\ref{formulation:3}) and (\ref{formulation:4}) in (\ref{formulation:2}),  we obtain
\begin{equation}\label{formulation:5}
\begin{array}{*{20}{l}}
{{\bm{\phi}}\left( {{\omega _j},{\theta _k}} \right) = \left[ {\begin{array}{*{20}{c}}
{\begin{array}{*{20}{c}}
1\\
{\exp \left( { - i\frac{{{\omega _j}}}{{{\omega _1}}}\pi \sin \left( {{\theta _k}} \right)} \right)\;}
\end{array}}\\
{\begin{array}{*{20}{c}}
 \vdots \\
{\exp \left( { - i\left( {M - 1} \right)\frac{{{\omega _j}}}{{{\omega _1}}}\pi \sin \left( {{\theta _k}} \right)} \right)\;}
\end{array}}
\end{array}} \right].}
\end{array}
\end{equation}
Next, by defining
\begin{equation}\label{formulation:6}
\begin{array}{l}
{{\rm{\alpha }}_j} \triangleq \frac{{{\omega _j}}}{{{\omega _1}}},\\
{f_k} \triangleq \frac{1}{2}\cos \left( {{\theta _k}} \right),
\end{array}
\end{equation}
where ${{\rm{\alpha }}_j} \le 1$, ${{\rm{\alpha }}_1} = 1$ and ${\bm{f}} = {\left[ {{f_1},\; \ldots ,{f_K}} \right]^T} \in {\left[ { - \frac{1}{2},\frac{1}{2}} \right]}$ as the spatial frequencies of the DOAs, we can show
\begin{equation}
{\bm{\phi}}\left( {{\omega _j},{\theta _k}} \right) ={\bm{a}}\left( {f} \right),
\end{equation}
where
\begin{equation}\label{formulation:7}
\begin{array}{*{20}{l}}
{{\bm{a}}\left( {f} \right) = \left[
1, {\exp \left( { - i2\pi {f}} \right)\;}, \ldots, {\exp \left( { - i\left( {M - 1} \right)2\pi{f}} \right)\;}\right]^T}
\end{array}
\end{equation}
 and  $f ={\alpha _j}{f_k}$. According to (\ref{formulation:7}), we can reformulate (\ref{formulation:1}) as
\begin{equation}\label{formulation:8}
{\bm{y}}\left( {{\omega _j}} \right) = {\left[ {\bm{a}}\left( {{\alpha _j}{f_1}} \right),\; \ldots ,{\bm{a}}\left( {{\alpha _j}{f_K}} \right) \right]}
{\bm{s}}\left( {{\omega _j}} \right) + {\bm{n}}\left( {{\omega _j}} \right),
\end{equation}
from which our gridless sparse method for estimating  ${\bm{f}}$  and subsequently the DOAs of wideband sources  can be derived.

\section{Proposed gridless sparse method for wideband DOA estimation}\label{sec:proposed}
We assume that  the focusing matrices ${{\bm{T}}_j} \in \mathbb{C}^{M\times M}$, $j = 1, \ldots ,J$, satisfy the following properties,
\begin{equation}\label{formulation:9}
\begin{array}{l}
{\bm{a}}\left( {{{\rm{\alpha }}_j}f} \right) = {{\bm{T}}_j}{\bm{a}}\left( {{{\rm{\alpha }}_1}f} \right) + {{\bm{e}}_j}\left( f \right)\\
\;\;\;\;\;\;\;\;\;\;\;\;\;\;\; = {{\bm{T}}_j}{\bm{a}}\left( f \right) + {{\bm{e}}_j}\left( f \right),
\end{array}
\end{equation}
in which ${{\bm{e}}_j}\left( f \right)$ is the focusing error vector in ${\omega _j}$ for the DOA spatial frequency $f\in \left[ { - \frac{1}{2},\frac{1}{2}} \right]$.  This model shows that ${\bm{a}}\left( {{{\rm{\alpha }}_j}f} \right)$ can linearly be approximated by ${\bm{a}}\left( f \right)$ and ${{\bm{T}}_j}$, in the sense that ${\bm{a}}\left( f \right)$ with the spatial frequency $f$  can be focused on the lower frequencies, ${{\alpha }_j}f$. This lets  us to  design a more accurate and efficient focusing matrix as follows.\\
To generate the focusing matrices ${{\bm{T}}_j}$, we use the method developed in \cite{milad} to get
\begin{equation}\label{formulation:10}
\begin{array}{l}
{{\bm{T}}_j} = \left[ {\begin{array}{*{20}{c}}
{{{\bm{T}}_j}\left( {1,1} \right)}& \ldots &{{{\bm{T}}_j}\left( {1,M} \right)}\\
 \vdots & \ddots & \vdots \\
{{{\bm{T}}_j}\left( {M,1} \right)}& \ldots &{{{\bm{T}}_j}\left( {M,M} \right)}
\end{array}} \right],\\
\end{array}
\end{equation}
where  ${{\bm{T}}_j}\left( {m,m'} \right) = sinc\left( {{{\rm{\alpha }}_j}\left( {m - 1} \right) - \left( {m' - 1} \right)} \right)$, and
$m,m' = 1, \ldots ,M.$ Note that for ${\alpha _1} = 1$, we get ${{\bm{T}}_1} = {{\bm{I}}_M}$.

Using the output vectors of adjacent sensors  given by (\ref{formulation:8}) and using  (\ref{formulation:9}), the array output matrix $
{\bm{Y}} = \left[ {{\bm{y}}\left( {{\omega _1}} \right), \ldots ,{\bm{y}}\left( {{\omega _J}} \right)} \right] \in\mathbb{C}^{M\times J}$ is
obtained as
\begin{equation}\label{formulation:13}
{\bm{Y}} = \mathop \sum \limits_{k = 1}^K {\beta _k}{\bm{A}}\left( {{f_k},{{\bm{c}}_k}} \right) + {\bm{N}} + {\bm{E}},
\end{equation}
where  $ {\bm{A}}\left( {f_k,{\bm{c}_k}} \right) = \left[ {{{\bm{T}}_1}{\bm{a}}\left( f_k \right), \ldots ,{{\bm{T}}_J}{\bm{a}}\left( f_k \right)} \right]\times diag\left( {\bm{c}_k} \right)$, ${\beta _k}$ and ${{\bf{c}}_k}$ are defined according to (\ref{formulation:8}) as
$$\begin{array}{l}
{\beta _k} = \;\mathop {\left\| {\left[ {\begin{array}{*{20}{c}}
{{s_k}\left( {{\omega _1}} \right)}\\
 \vdots \\
{{s_k}\left( {{\omega _J}} \right)}
\end{array}} \right]} \right\|}\nolimits_2 ,\;\;{{\bm{c}}_k} =  {\left[ {\begin{array}{*{20}{c}}
{{s_k}\left( {{\omega _1}} \right)}\\
 \vdots \\
{{s_k}\left( {{\omega _J}} \right)}
\end{array}} \right]} /{\beta _k},\\
\end{array} $$
and the noise  and focusing error matrices  ${\bm{N}}$ and ${\bm{E}}$ are respectively given by
$$\begin{array}{l}
{\bm{N}} = \left[ {{\bm{n}}\left( {{\omega _1}} \right), \ldots ,{\bm{n}}\left( {{\omega _J}} \right)} \right]\in\mathbb{C}^{M\times J},\\
\;{\bm{E}} = \left[ {{{\bm{\acute{e}}}_1}, \ldots ,{{\bm{\acute{e}}}_J}} \right]\in\mathbb{C}^{M\times J},\\
\;{{\bm{\acute{e}}}_j} = \left[ {{{\bm{e}}_j}\left( {{f_1}} \right),\; \ldots ,{{\bm{e}}_j}\left( {{f_K}} \right)} \right] \times {\bm{s}}\left( {{\omega _j}} \right).
\end{array}$$

In this way, the noiseless array output matrix is
${\bm{X^\star}} = \mathop \sum\nolimits_{k=1}^{K} {\beta _k}{\bm{A}}\left( {{f_k},{{\bm{c}}_k}} \right),$
which can be recovered from the array output matrix in order to estimate the DOA spatial frequencies vector $\bm f$. For this purpose, we define the set of atoms as
$$ \begin{array}{l}
{\cal A} = \{{\bm{A}}\left( {f,{\bm{c}}} \right) = \left[ {{{\bm{T}}_1}{\bm{a}}\left( f \right), \ldots ,{{\bm{T}}_J}{\bm{a}}\left( f \right)} \right]\times diag\left( {\bm{c}} \right){\rm{|}}\;\\
\;\;\;\;\;\;\;\;\;\;f \in \left[ { - \frac{1}{2},\frac{1}{2}} \right],{\bm{c}} \in\mathbb{C}^{J\times 1} ,\;{\left\| {\bm{c}} \right\|_2} = 1\},
\end{array}$$
by which the atomic norm is defined for ${\bm{X}}$ as
\begin{equation}\label{formulation:12}
\begin{array}{l}
{{\left\| {\bm{X}} \right\|}_{\cal A}} = {\rm{inf}}\{ t > 0{\rm{\;}}:{\bm{X}} \in t\;conv\left( {\cal A} \right)\} \\
  \begin{array}{l}
\;\;\;\;\;\;\;\;\;= \mathop {\inf }\limits_{{f_k},{\beta _k},{{\bm{c}}_k}} \{ \mathop \sum \nolimits_k {\beta _k}\;:{\bm{X}} = \mathop \sum \nolimits_k {\beta _k}{\bm{A}}\left( {{f_k},{{\bm{c}}_k}} \right),\;\\
\;\;\;\;\;\;\;\;\;\;\;\;\;\;\;\;\;\;\;\;\;\;\;\;{f_k} \in \left[ { - \frac{1}{2},\frac{1}{2}} \right],\;\left\|{{\bm{c}}_k}\right\|_2 = 1, {\beta _k} > 0\}.
\end{array}
\end{array}
\end{equation}
Assuming that the sum of the noise power and focusing error power is equal to
$\left\| {\bm{N}} \right\|_F^2 + \left\| {\bm{E}} \right\|_F^2 = \gamma $ and that the number of sources is small, we propose the following sparse problem for coherent estimation of wideband DOAs,
\begin{equation}\label{formulation:15}
\begin{array}{l}
\mathop {\min }\limits_{ {\bm{X}}} \;{\left\| { {\bm{X}}} \right\|_{\cal A}}\\
subject\;to\;\;{\left\| { {\bm{X}} - {\bm{Y}}} \right\|_F} \le \;\sqrt \gamma.
\end{array}
\end{equation}
The optimum matrix ${{\bm{X}}_{opt}}$ resulted from (\ref{formulation:15}); which is an estimate of the noiseless data matrix ${\bm{X^\star}}$, can be described by its atoms as
\begin{equation}\label{formulation:16}
{ {\bm{X}}_{opt}} = \mathop \sum \limits_{k = 1}^{\widehat K} {\widehat \beta _k}{\bm{A}}\left( {{{\widehat f}_k},{{\widehat {\bm{c}}}_k}} \right),
\end{equation}
where ${\widehat \beta _k}$ ,${{\widehat f}_k}$, and ${{\widehat {\bm{c}}}_k}$ are the estimates of true ${\beta _k}$,${{f}_k}$, and ${{\bm{c}}_k}$, respectively, and ${\widehat K}$ is an estimate of real source numbers ${K}$.

To solve the primal problem in (\ref{formulation:15}) and estimate the DOAs from the atoms of $ {\bm{X}}_{opt}$, we present its Lagrangian dual problem as
\begin{equation}\label{formulation:17}
\begin{array}{l}
\mathop {{\rm{m}}ax}\limits_{\bm{H}} Re\left\{ {Tr\left( {{{\bm{Y}}^H}{\bm{H}}} \right)} \right\} - \sqrt \gamma  {\left\| {\bm{H}} \right\|_F}\\
subject\;to\; \;\left\| {\bm{H}} \right\|_{\cal A}^* \le 1\;
\end{array},
\end{equation}
where $\left\| \cdot  \right\|_{\cal A}^*\;$ shows the dual atomic norm. Since $ {\bm{X}} = {\bm{Y}}$ is a feasible solution for  (\ref{formulation:15}), strong duality holds according to Slater's condition \cite{boyd2004convex}.

With the assumption of  ${\bm{H}} = \left[ {{{\bm{h}}_1}, \ldots ,{{\bm{h}}_J}} \right] \in \mathbb{C}^{M\times J}$, the dual atomic norm $\left\| {\bm{H}} \right\|_{\cal A}^*$ in (\ref{formulation:17}) is defined as
\begin{equation}\label{formulation:18}
\begin{array}{l}
\left\| {\bm{H}} \right\|_{\cal A}^* = \mathop {\sup }\limits_{\;{{\left\| {\bm{X}} \right\|}_{\cal A}} \le 1} Re\left\{ {Tr\left( {{{\bm{H}}^H}{\bm{X}}} \right)} \right\}\\
\;\;\;\;\;\;\;\;\;\;\; = \mathop {\sup }\limits_{f \in \left[ { - \frac{1}{2},\frac{1}{2}} \right]\;,\;\;\;\left\|{{\bm{c}}_k}\right\|_2 = 1} Re\left\{ {Tr\left( {{{\bm{H}}^H}{\bm{A}}\left( {f,{\bm{c}}} \right)} \right)} \right\}\\
\;\;\;\;\;\;\;\;\;\;\; = \mathop {\sup }\limits_{f \in \left[ { - \frac{1}{2},\frac{1}{2}} \right]\;,\;\;\;\left\|{{\bm{c}}_k}\right\|_2 = 1} Re\left\{ {{{\bm{c}}^T}\left[ {\begin{array}{*{20}{c}}
{{{({\bm{T}}_1^H{{\bm{h}}_1})}^H}{\bm{a}}\left( f \right)}\\
 \vdots \\
{{{({\bm{T}}_J^H{{\bm{h}}_J})}^H}{\bm{a}}\left( f \right)}
\end{array}} \right]} \right\}\\
\;\;\;\;\;\;\;\;\;\;\; = \mathop {\max }\limits_{f \in \left[ { - \frac{1}{2},\frac{1}{2}} \right]\;} {\left\| {\left[ {\begin{array}{*{20}{c}}
{{{({\bm{T}}_1^H{{\bm{h}}_1})}^H}{\bm{a}}\left( f \right)}\\
 \vdots \\
{{{({\bm{T}}_J^H{{\bm{h}}_J})}^H}{\bm{a}}\left( f \right)}
\end{array}} \right]} \right\|_2},
\end{array}
\end{equation}
where  for the last expression, we have used the Cauchy-Schwarz inequality $\begin{array}{l}
 Re\left\{ {{{\bm{p}}^H} {{\bm{q}}}} \right\}\leq {\left\| \bm{p} \right\|_2}{\left\| \bm{q} \right\|_2},\\
\end{array}$
which is held for any vectors $\bm{p}$ and $\bm{q}$ of the same size.
Next, we describe the primal in  (\ref{formulation:17}) in the form of a Semidefinite Programming (SDP). For this purpose, the condition $\left\| {\bm{H}} \right\|_{\cal A}^* \le 1$ should be expressed in the form of an SDP condition, for which we present the following proposition.
\begin{proposition}\label{proposition}
For matrices ${\bm{H}} = \left[ {{{\bm{h}}_1}, \ldots ,{{\bm{h}}_J}} \right] \in \mathbb{C}^{M\times J}$, $\overline {\bm{H}}  = \left[ {{\bm{T}}_1^H{{\bm{h}}_1}, \ldots ,{\bm{T}}_J^H{{\bm{h}}_J}} \right]\in \mathbb{C}^{M\times J}$, and ${{\bm{T}}_j} \in \mathbb{C}^{M\times M}$, $j = 1, \ldots ,J$, if there exists a Hermitian matrix ${\bm{Q}}\in \mathbb{C}^{M\times M}$ with the condition
\begin{equation}\label{formulation:33}
\begin{array}{l}
\left[ {\begin{array}{*{20}{c}}
{\bm{Q}}&{\overline {\bm{H}} }\\
{{{\overline {\bm{H}} }^H}}&{{{\bm{I}}_J }}
\end{array}} \right]\succeq0\\\\
\end{array}
\end{equation}
and
\begin{equation}\label{formulation:34}
\sum\limits_{n = 1}^{M - m} {\mathop Q\nolimits_{n,n + m} }  = \begin{cases}
1, & m = 0\\
0, & m = 1,...,M - 1,
\end{cases}
\end{equation}
the inequality
\begin{equation}\label{formulation:19}
\left\| {\bf{H}} \right\|_{\cal A}^* = \mathop {\max }\limits_{f \in \left[ { - \frac{1}{2},\frac{1}{2}} \right]\;} {\left\| {{{{\overline {\bf{H}} }^H}{\bm{a}}(f)}} \right\|_2} \le 1
\end{equation}
is satisfied. Reciprocally, the relationship (\ref{formulation:19}) implies (\ref{formulation:33}) and (\ref{formulation:34}).
\end{proposition}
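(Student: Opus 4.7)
The plan is to prove the two implications separately. For sufficiency, I would combine the Schur complement on the LMI in (\ref{formulation:33}) with the diagonal-sum identities in (\ref{formulation:34}) to show that $\bm{a}(f)^H \bm{Q}\bm{a}(f)$ equals $1$ for every $f\in[-1/2,1/2]$; the PSD ordering implied by Schur then immediately controls $\|\overline{\bm{H}}^H \bm{a}(f)\|_2$. For necessity, I would pass to the SDP dual of the (infinite-dimensional) program that defines $\|\bm{H}\|_{\cal A}^*$, invoking the theory of positive trigonometric polynomials to produce the certifying matrix $\bm{Q}$.

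In detail, for sufficiency, Schur complement applied to (\ref{formulation:33}) with $\bm{I}_J\succ 0$ yields $\bm{Q}\succeq \overline{\bm{H}}\,\overline{\bm{H}}^H$, so that $\bm{a}(f)^H \bm{Q}\bm{a}(f) \ge \|\overline{\bm{H}}^H \bm{a}(f)\|_2^2$ for every $f$. Expanding the left-hand side as a double sum over entries of $\bm{Q}$ and grouping terms by the offset $m=n'-n$ between row and column indices yields
\begin{equation*}
\bm{a}(f)^H\bm{Q}\bm{a}(f) = \sum_{m=-(M-1)}^{M-1} e^{-i2\pi f m}\,\sigma_m(\bm{Q}),
\end{equation*}
where $\sigma_m(\bm{Q})=\sum_n Q_{n,n+m}$ is the sum along the $m$-th diagonal. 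The constraint (\ref{formulation:34}) gives $\sigma_m(\bm{Q})=\delta_{m,0}$ for $m\ge 0$, while the Hermitian symmetry of $\bm{Q}$ forces $\sigma_{-m}(\bm{Q})=\overline{\sigma_m(\bm{Q})}$ and hence the analogous identity for $m<0$. Thus $\bm{a}(f)^H\bm{Q}\bm{a}(f)=1$ for every $f$, and taking the supremum over $f\in[-1/2,1/2]$ in the displayed inequality recovers exactly (\ref{formulation:19}).

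For the reverse direction I would first note that (\ref{formulation:19}) is equivalent, by a second Schur argument, to the $(J+1)\times(J+1)$ matrix-valued trigonometric polynomial
\begin{equation*}
\bm{P}(f) = \begin{bmatrix} 1 & \bm{a}(f)^H\overline{\bm{H}}\\ \overline{\bm{H}}^H\bm{a}(f) & \bm{I}_J \end{bmatrix}
\end{equation*}
being positive semidefinite for every $f\in[-1/2,1/2]$. The semi-infinite LMI $\bm{P}(f)\succeq 0$ admits a finite-dimensional SDP reformulation via the standard positive trigonometric polynomial / KYP-type lemma, and the Lagrange dual of the resulting SDP returns a Hermitian $\bm{Q}$ satisfying (\ref{formulation:33})--(\ref{formulation:34}); the diagonal-sum equations enter as the dual variables pairing against the constant mode of $\bm{P}(f)$. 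This is precisely the dual-pairing pattern used in the atomic-norm SDP characterizations of \cite{tang2013compressed,castanheira2019low}, and the focusing matrices $\bm{T}_j$ do not disturb it because they are absorbed into $\overline{\bm{H}}$. The principal obstacle is exactly this reverse direction: sufficiency is little more than Schur plus a Fourier expansion, whereas necessity requires either a matrix Fej\'er--Riesz argument to construct $\bm{Q}$ explicitly, or a careful verification of strong SDP duality and an appropriate constraint qualification for the semi-infinite LMI.
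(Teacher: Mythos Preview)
Your forward direction is exactly the paper's argument: Schur complement on (\ref{formulation:33}) gives $\bm{Q}\succeq\overline{\bm{H}}\,\overline{\bm{H}}^H$, hence $\|\overline{\bm{H}}^H\bm{a}(f)\|_2^2\le \bm{a}(f)^H\bm{Q}\,\bm{a}(f)$, and the diagonal-sum constraints (\ref{formulation:34}) force the right-hand side to equal $1$ for every $f$.

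For the converse the paper writes only that ``the reciprocal proof is derived by backward reasoning,'' which is not really an argument; you are right that this direction is the nontrivial one and requires a positive trigonometric polynomial / bounded real lemma type result to \emph{produce} the certifying $\bm{Q}$. Your plan via the $(J{+}1)\times(J{+}1)$ matrix polynomial $\bm{P}(f)$ is valid, but slightly heavier than necessary: since $\|\overline{\bm{H}}^H\bm{a}(f)\|_2^2=\bm{a}(f)^H\overline{\bm{H}}\,\overline{\bm{H}}^H\bm{a}(f)$ is already a \emph{scalar} trigonometric polynomial of degree $M-1$ in $f$, the inequality (\ref{formulation:19}) is equivalent to nonnegativity of the scalar polynomial $1-\bm{a}(f)^H\overline{\bm{H}}\,\overline{\bm{H}}^H\bm{a}(f)$ on the circle, and the classical Fej\'er--Riesz theorem (or its Gram-matrix SDP form, as in \cite{tang2013compressed}) directly furnishes a PSD $\bm{Q}$ with the prescribed diagonal sums; one more Schur complement then recovers the block LMI (\ref{formulation:33}). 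Either route works, and in both cases your identification of the constraint-qualification / strong-duality check as the only delicate point is accurate.
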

\begin{proof}
With the Schur complement, (\ref{formulation:33}) holds, if and only if,
$${\bm{Q}}\succeq0$$ and $${\bm{Q}} - {\overline {\bm{H}} }\;{{\overline {\bm{H}} }^H}  \succeq0.$$
Thus, for any ${\bm{a}}(f)$ and $f \in {\left[ { - \frac{1}{2},\frac{1}{2}} \right]}$, we get
 $$\mathop {\left\| {{{\overline {\bm{H}} }^H}{\bm{a}}(f)} \right\|}\nolimits_2^2 \leq {{\bm{a}}(f)^H} {\bm{Q}}{\bm{a}}(f).$$
But from (\ref{formulation:34}), we have ${{\bm{a}}(f)^H} {\bm{Q}}{\bm{a}}(f) = 1$ and subsequently (\ref{formulation:19}) is  proved.
The reciprocal proof is derived by backward reasoning. \\
\end{proof}
 Now, using \textit{Proposition \ref{proposition}}, the optimisation problem in (\ref{formulation:17}) can be represented by the following SDP problem,\\
\begin{equation}\label{formulation:21}
\begin{array}{l}
\mathop {\max }\limits_{\bm{H, Q}} Re\left\{ {Tr\left( {{{\bm{Y}}^H}{\bm{H}}} \right)} \right\} - \sqrt \gamma  {\left\| {\bm{H}} \right\|_F}\\\\
subject \; to \;\;\;\left[ {\begin{array}{*{20}{c}}
{\bm{Q}}&{\overline {\bm{H}} }\\\\
{{{\overline {\bm{H}} }^H}}&{{{\bm{I}}_J}}
\end{array}} \right]\succeq0,\\\\
\;\;\;\;\;\;\;\;\;\;\;\;\;\;\;\;\;\;\;\;\; \overline {\bm{H}}  = \left[ {{\bm{T}}_1^H{{\bm{h}}_1}, \ldots ,{\bm{T}}_J^H{{\bm{h}}_J}} \right],\\\\
\;\;\;\;\;\;\;\;\;\;\;\;\;\;\;\;\;\;\;\;\; \sum\limits_{n = 1}^{M - m} {\mathop Q\nolimits_{n,n + m} }  = \begin{cases}
1, & m = 0,\\
0, & m = 1,...,M - 1,
\end{cases}\\\\
\;\;\;\;\;\;\;\;\;\;\;\;\;\;\;\;\;\;\;\;\; {\bm{Q}}\; is\; Hermitian.
\end{array}
\end{equation}
From (\ref{formulation:21}), the optimum solutions
${{\bm{H}}_{opt}} = \left[ {{{\bm{h}}_{{\rm{ }}{1_{opt}}}}, \ldots ,{{\bm{h}}_{{\rm{ }}{J_{opt}}}}} \right]$ and ${\overline {\bm{H}} _{opt}} = \left[ {{\bm{T}}_1^H\mathop {\bm{h}}\nolimits_{\mathop 1\nolimits_{opt} } , \ldots ,{\bm{T}}_J^H\mathop {\bm{h}}\nolimits_{\mathop J\nolimits_{opt} } } \right]$ are obtained and DOAs are subsequently estimated based on \textit{Theorem \ref{theorem}}.
\begin{theorem}\label{theorem}
If ${ {\bm{X}}_{opt}}$ and ${\overline {\bm{H}} _{opt}}$ are the solutions of the primal  and dual problems in (\ref{formulation:15}) and  (\ref{formulation:21}), respectively,  then the estimates of DOA frequencies in (\ref{formulation:16}), ${\widehat f_k}$, and $k = 1, \ldots ,\widehat K$ will satisfy,
\begin{equation}\label{formulation:300}
 {\left\| {\left[
{\mathop {\mathop {\overline {\bm{h}} }\nolimits_1^H }\nolimits_{opt} {\bm{a}}\left( {{{\widehat f}_k}} \right)},
 \ldots,
{\mathop {\mathop {\overline {\bm{h}} }\nolimits_J^H }\nolimits_{opt} {\bm{a}}\left( {{{\widehat f}_k}} \right)}
 \right]^T} \right\|_2} = 1,
  \end{equation}
  and
  \begin{equation}\label{formulation:301}
  {\widehat {\bm{c}}_k} = {\left[ {\overline {\bm{h}} {{_1^H}_{opt}}{\bm{a}}\left( {{{\hat f}_k}} \right), \ldots ,\overline {\bm{h}} {{_J^H}_{opt}}{\bm{a}}\left( {{{\hat f}_k}} \right)} \right]^H}.
  \end{equation}
\end{theorem}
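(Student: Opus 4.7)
The plan is to exploit strong duality (already justified via Slater's condition in the paper) to force every atom appearing in the optimal atomic decomposition of $\bm{X}_{opt}$ to be extremal against the dual optimizer $\bm{H}_{opt}$. First I would chain the inequalities
\begin{equation*}
\|\bm{X}_{opt}\|_{\cal A} \;\ge\; Re\{Tr(\bm{H}_{opt}^H \bm{X}_{opt})\} \;\ge\; Re\{Tr(\bm{Y}^H \bm{H}_{opt})\} - \sqrt{\gamma}\,\|\bm{H}_{opt}\|_F,
\end{equation*}
where the first bound is the defining inequality of the dual norm together with $\|\bm{H}_{opt}\|_{\cal A}^* \le 1$, and the second uses the matrix Cauchy--Schwarz/Frobenius bound on $Re\{Tr(\bm{H}_{opt}^H(\bm{X}_{opt}-\bm{Y}))\}$ combined with primal feasibility $\|\bm{X}_{opt}-\bm{Y}\|_F \le \sqrt{\gamma}$. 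Strong duality collapses this chain to equalities, so in particular $\|\bm{X}_{opt}\|_{\cal A} = Re\{Tr(\bm{H}_{opt}^H \bm{X}_{opt})\}$.

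Next I would plug in the atomic decomposition from (\ref{formulation:16}), $\bm{X}_{opt} = \sum_{k=1}^{\widehat{K}} \widehat{\beta}_k\,\bm{A}(\widehat{f}_k,\widehat{\bm{c}}_k)$ with $\widehat{\beta}_k>0$ and $\|\widehat{\bm{c}}_k\|_2=1$, which (by the infimum characterisation in (\ref{formulation:12})) realises $\|\bm{X}_{opt}\|_{\cal A} = \sum_k \widehat{\beta}_k$. Since every $\bm{A}(\widehat{f}_k,\widehat{\bm{c}}_k)$ is an atom, the dual-norm constraint gives $Re\{Tr(\bm{H}_{opt}^H \bm{A}(\widehat{f}_k,\widehat{\bm{c}}_k))\} \le 1$ for each $k$. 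Weighting by the strictly positive $\widehat{\beta}_k$ and matching the total $\sum_k \widehat{\beta}_k$ forces term-by-term saturation, so $Re\{Tr(\bm{H}_{opt}^H \bm{A}(\widehat{f}_k,\widehat{\bm{c}}_k))\} = 1$ for every $k$.

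Finally I would unpack this identity exactly as was done for the dual norm in (\ref{formulation:18}): writing $\bm{v}(f) \triangleq [\overline{\bm{h}}_{1_{opt}}^H \bm{a}(f),\ldots,\overline{\bm{h}}_{J_{opt}}^H \bm{a}(f)]^T$, the saturated identity reads $Re\{\widehat{\bm{c}}_k^T \bm{v}(\widehat{f}_k)\} = 1$. Two successive applications of Cauchy--Schwarz give $Re\{\widehat{\bm{c}}_k^T \bm{v}(\widehat{f}_k)\} \le \|\widehat{\bm{c}}_k\|_2\,\|\bm{v}(\widehat{f}_k)\|_2 \le 1$, the outer bound being the dual constraint evaluated at $\widehat{f}_k$. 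Forcing equality in the outer inequality gives $\|\bm{v}(\widehat{f}_k)\|_2 = 1$, which is precisely (\ref{formulation:300}); forcing equality in the inner one (Cauchy--Schwarz tightness, with the reality condition coming from the $Re$) pins $\widehat{\bm{c}}_k = \overline{\bm{v}(\widehat{f}_k)}/\|\bm{v}(\widehat{f}_k)\|_2 = \overline{\bm{v}(\widehat{f}_k)}$, which after taking entry-wise conjugates is exactly (\ref{formulation:301}).

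I expect the only delicate step to be the bookkeeping between the $Re\{Tr(\cdot)\}$ pairing and the complex-conjugation that appears in the equality condition of Cauchy--Schwarz, which is what determines the precise placement of the outer $(\cdot)^H$ in (\ref{formulation:301}); the strong-duality and atomic-norm ingredients are standard, and Slater's condition has already been verified, so no fresh analytic work is required there.
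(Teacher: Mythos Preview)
Your proof is correct and follows essentially the same route as the paper's: strong duality collapses the chain of inequalities to give $\|\bm{X}_{opt}\|_{\cal A} = Re\{Tr(\bm{H}_{opt}^H \bm{X}_{opt})\}$, then substituting the atomic decomposition and invoking the dual-norm constraint forces term-by-term saturation, with the two Cauchy--Schwarz equality conditions yielding (\ref{formulation:300}) and (\ref{formulation:301}). The only cosmetic difference is that the paper writes the sandwich as two separate one-sided bounds rather than a single collapsing chain.
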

\noindent The proof of \textit{Theorem \ref{theorem}} is given in Appendix A.

In this way, by estimating ${ {\bm{X}}_{opt}}$, its atoms are used to compute ${{{\widehat f}_k}}$, as the estimates of $f_k$, and afterwards the DOAs of wideband sources are found.
\section{Numerical simulations}\label{sec:simulations}
 We consider an underwater scenario with a ULA composed of $M=16$ hydrophones and $c=1500 m/s$. Source signals are random waves with 512 samples, whose bandwidths in discrete frequency domain lie in $[\pi/3,2\pi/3]$.
Also, a 60-point DFT is applied to the received signals, where the number of the selected
frequency bins is $J = 10$, and the measuring noise in (\ref{formulation:6}) is zero mean white Gaussian with variance ${{\sigma _{noise}^2}}$.

The performance of the WGS algorithm is compared to that of the RSS and WS-TOPS methods.

The initial values for the
RSS are the true DOAs added up with some errors within $\pm \mathop 2\nolimits^ \circ$ randomly chosen from a uniform distribution.
 Moreover,  as a required information, we provide the true number of sources for both RSS and WS-TOPS methods.  Simulation results are presented by averaging $100$ independent trials of each experiment.

In the first experiment,  we consider three sources located at ${\boldsymbol{\theta}}=[{-5}^{\circ},15^{\circ},40^{\circ}]^T$.
The results are compared in Fig. \ref{figure1} in RootMean-Squared Error (RMSE) sense at different SNRs. As seen, the RSS generates the largest RMSE, mainly  due to the impact of error on the initial values which could dominate the noise effect. Furthermore, WS-TOPS achieves a better performance and WGS offers the least RMSE.
\begin{figure}[!t]
\centering
\includegraphics[width=85mm]{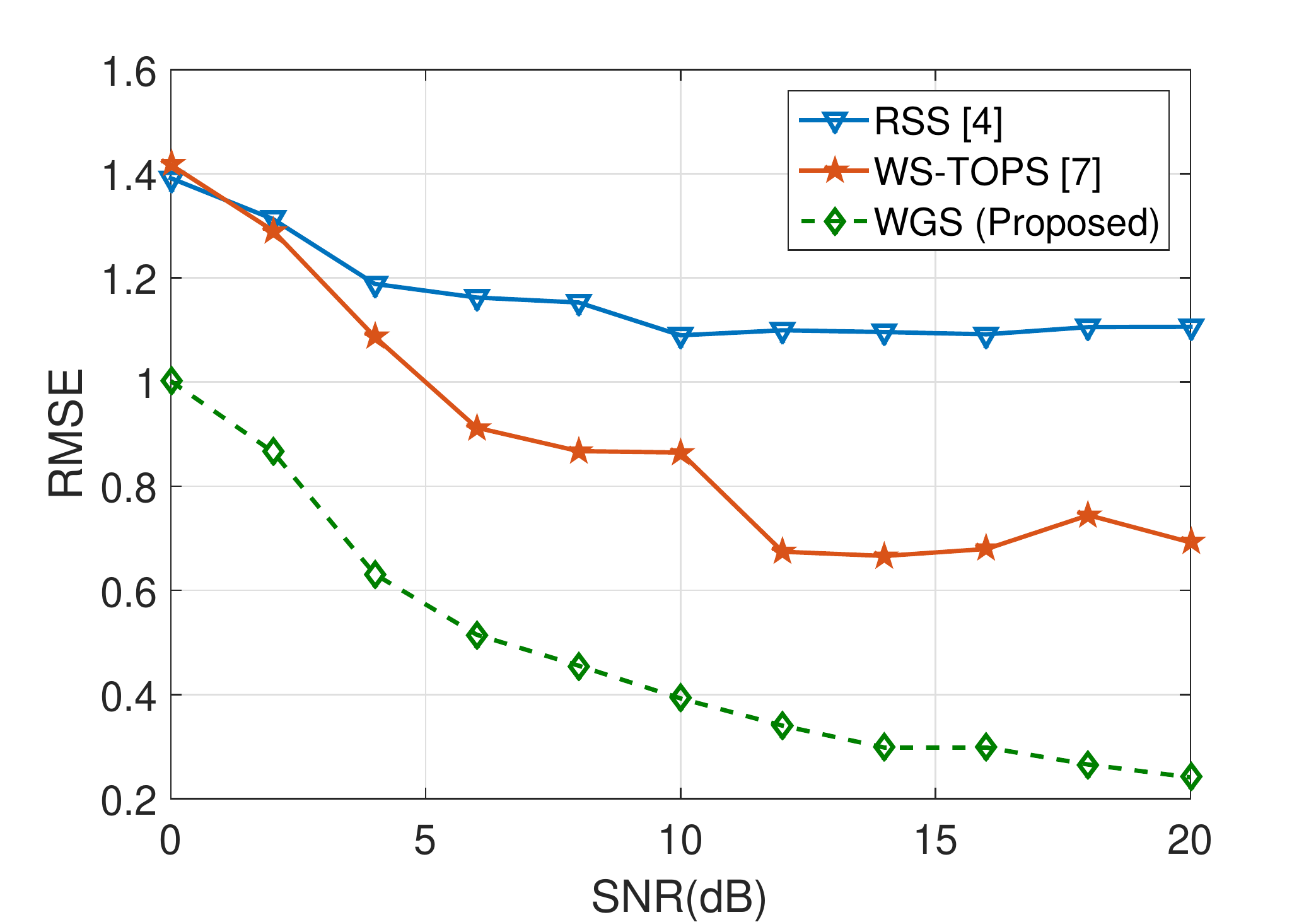}
\caption{RMSEs for WGS, RSS, and WS-TOPS methods for three sources at ${\boldsymbol{\theta}}=[{-5}^{\circ},15^{\circ},40^{\circ}]^T$ and different SNRs.}\label{figure1}
\end{figure}
  In the next experiment, we investigate the resolution of the estimators for different DOA angles at 10 dB SNR. The first DOA is fixed at ${\theta _1} = \mathop {40}\nolimits^ \circ$ and the second DOA varies between ${\theta _2} = \left[ {\mathop {28}\nolimits^ \circ  ,\mathop {37}\nolimits^ \circ  } \right]$.  The RMSEs are shown for $\Delta \theta  = {\theta _1} - {\theta _2}$ in Table \ref{tabel:2}. One can see that the WS-TOPS is unable to
estimate the DOAs for $\Delta \theta < \mathop {11}\nolimits^ \circ$, which is due to  needing more snapshots for a better performance. These values  are less than $\mathop {6}\nolimits^ \circ$ and $\mathop {3}\nolimits^ \circ$ for the RSS and WGS, respectively.

  \begin{table}[!t]
\caption{RMSEs for WGS, RSS, and WS-TOPS methods for different resolutions.}
\begin{center}
\begin{tabular}{|c||c|c|c|c|c|}\hline
$\Delta \theta$ (Degree) & 12  & 11  & 10  & 9  & 8  \\ \hline
WGS (Proposed) &\bf 0.5662  &\bf 0.5460 &\bf 0.6209 &\bf 0.7315  &\bf 0.7490  \\ \hline
WS-TOPS \cite{hayashi2016doa}  & 0.6768  & 0.6256 & Failed & Failed       & Failed   \\ \hline
RSS \cite{hung1988focussing}   & 1.0631  & 1.0838 & 1.1266 & 1.1666 & 1.1066  \\ \hline \hline
$\Delta \theta$ (Degree) & 7  & 6  & 5  & 4  & 3  \\ \hline
WGS (Proposed) &\bf 0.7510  &\bf 0.8984 &\bf 1.1119 &\bf 1.1117  &\bf 1.2242  \\ \hline
WS-TOPS \cite{hayashi2016doa}  & Failed  & Failed & Failed & Failed       & Failed   \\ \hline
RSS \cite{hung1988focussing}   & 1.0526  & 1.0728 & Failed & Failed & Failed  \\ \hline
\end{tabular}\label{tabel:2}
\end{center}
\end{table}

\section{Conclusion}\label{sec:coclusion}
We proposed a coherent gridless sparse method for wideband DOA estimation by defining a new atomic norm and solving the corresponding  SDP. Simulations results demonstrate that this method is more robustness to noise with a better resolution compared to the RSS and WS-TOPS methods. Moreover, this super-resolution based method needs no knowledge about the number of sources.

%
%


\appendix
\subsection{Proof of \textit{Theorem \ref{theorem}}}
From the strong duality theorem for the optimal solutions of primal and dual problems, ${\bm{X} _{opt}}$ and ${\bm{H} _{opt}}$ in (\ref{formulation:15}) and (\ref{formulation:21}), respectively,  we get
\renewcommand{\theequation}{\thesubsection.\arabic{equation}}
\setcounter{equation}{0}
\begin{equation}\label{formulation:22}
\begin{array}{l}
\mathop {\left\| {\mathop { {\bm{X}}}\nolimits_{opt} } \right\|}\nolimits_{\cal A} \; = Re\left\{ {Tr\left( {{{\bm{Y}}^H}{{\bm{H}}_{opt}}} \right)} \right\} - \sqrt \gamma  {\left\| {\bm{H}_{opt}} \right\|_F}\\
 = Re\left\{ {Tr\left( {\mathop { {\bm{X}}}\nolimits_{opt}^H {{\bm{H}}_{opt}}} \right)} \right\}\\ + Re\left\{ {Tr\left( {{{\left( {{\bm{Y}} - \mathop { {\bm{X}}}\nolimits_{opt} } \right)}^H}{{\bm{H}}_{opt}}} \right)} \right\} - \sqrt \gamma  {\left\| {\bm{H}_{opt}} \right\|_F}\\
 \le Re\left\{ {Tr\left( {\mathop { {\bm{X}}}\nolimits_{opt}^H {{\bm{H}}_{opt}}} \right)} \right\}.
\end{array}
\end{equation}
The last inequality has been written using the constraint in (\ref{formulation:15}), where we have ${\left\| {{\left( {{\bm{Y}} - \mathop { {\bm{X}}}\nolimits_{opt} } \right)}} \right\|_F} \leq \sqrt \gamma$, and incorporating
the following Cauchy-–Schwarz inequality,
$$ Re\left\{ {Tr\left( {{{\left( {{\bm{Y}} - \mathop { {\bm{X}}}\nolimits_{opt} } \right)}^H}{{\bm{H}}_{opt}}} \right)} \right\}\leq {\left\| {{\left( {{\bm{Y}} - \mathop { {\bm{X}}}\nolimits_{opt} } \right)}} \right\|_F}{\left\| {\bm{H}_{opt}} \right\|_F}.$$\\
 On the other hand, from the dual norm definition, we can write,
\begin{equation}\label{formulation:23}
\begin{array}{l}
Re\left\{ {Tr\left( {\mathop { {\bm{X}}}\nolimits_{opt}^H {{\bm{H}}_{opt}}} \right)} \right\} \le \;\mathop {\left\| {\mathop { {\bm{X}}}\nolimits_{opt} } \right\|}\nolimits_{\cal A} \mathop {\left\| {{{\bm{H}}_{opt}}} \right\|}\nolimits_{\;{\cal A}}^* \\
\;\;\;\;\;\;\;\;\;\;\;\;\;\;\;\;\;\;\;\;\;\;\;\;\;\;\;\;\;\;\;\;\;\;\; \le \;\mathop {\left\| {\mathop { {\bm{X}}}\nolimits_{opt} } \right\|}\nolimits_{\cal A},
\end{array}
\end{equation}
and thus from (\ref{formulation:22}) and (\ref{formulation:23}), the relationship
$$\mathop {\left\| {\mathop { {\bm{X}}}\nolimits_{opt} } \right\|}\nolimits_{\cal A}  = Re\left\{ {Tr\left( {\mathop { {\bm{X}}}\nolimits_{opt}^H {{\bm{H}}_{opt}}} \right)} \right\},$$ holds. Using (\ref{formulation:16})  and the Cauchy–-Schwarz inequality, the latter expression  leads to \begin{equation}\label{formulation:24}
\begin{array}{l}
{\left\| {{{ {\bm{X}}}_{opt}}} \right\|_A} \\
 =Re\left\{ Tr\left( \mathop {\bm{H}}\nolimits_{opt}^H \sum\limits_{k = 1}^{\hat K} {{{\hat \beta }_k}} [{\hat  c_1}\mathop T\nolimits_1 {\bm{a}}({\hat f _k}),...,{\hat c_J}\mathop T\nolimits_J {\bm{a}}({\hat  f_k})]\right) \right\}  \\
 =\mathop \sum \limits_{k = 1}^{\widehat K} {\widehat \beta _k}Re\left\{ {\widehat {\bm{c}}_k^T\left[ {\begin{array}{*{20}{c}}
{{{({\bm{T}}_1^H\mathop {\bm{h}}\nolimits_{\mathop 1\nolimits_{opt} } )}^H}{\bm{a}}\left( {{{\widehat f}_k}} \right)}\\
 \vdots \\
{{{({\bm{T}}_J^H\mathop {\bm{h}}\nolimits_{\mathop J\nolimits_{opt} } )}^H}{\bm{a}}\left( {{{\widehat f}_k}} \right)}
\end{array}} \right]} \right\}
 \le \mathop \sum \limits_{k = 1}^{\widehat K} {\widehat \beta _k},
\end{array}
\end{equation}
in which we incorporated (\ref{formulation:17} and (\ref{formulation:18}) to show that,
 $$\left\|{\left[
{{{({\bm{T}}_1^H\mathop {\bm{h}}\nolimits_{\mathop 1\nolimits_{opt} } )}^H}{\bm{a}}\left( {{{\widehat f}_k}} \right)}, \ldots, {{{({\bm{T}}_J^H\mathop {\bm{h}}\nolimits_{\mathop J\nolimits_{opt} } )}^H}{\bm{a}}\left( {{{\widehat f}_k}} \right)} \right]^T}\right\|_2 \leq 1.$$
Since from (\ref{formulation:16}), we obtain ${\left\| {{{ {\bm{X}}}_{opt}}} \right\|_A} = \mathop \sum \limits_{k = 1}^{\widehat K} {\widehat \beta _k}$,  (\ref{formulation:24}) is only satisfied for the equalities (\ref{formulation:300}, and (\ref{formulation:301}, which complete the proof.

\bibliographystyle{ieeetr}
\bibliography{Milad_Main_References}
\balance

\end{document}